\newtheorem{thm}{Theorem}
\newtheorem{defi}{Definition}
\newtheorem{lemma}{Lemma}
\newtheorem{corollary}{Corollary}
\newtheorem{propo}{Proposition}\newtheorem{proposition}{Proposition}
\newenvironment{proof}[1][Proof]{\noindent\textbf{#1.} }{\ \rule{0.5em}{0.5em}}
\newcommand{\conv}{\rm conv}
\newcommand{\bo}{\partial}
\newcommand{\tr}{\rm{tr}}\newcommand{\Tr}{\rm{Tr}}
\newcommand{\1}{\mathbf 1}
\newcommand{\tv}{\tilde{v}}
\newcommand{\tu}{\tilde{u}}
\newcommand{\A}{\mathcal A}
\newcommand{\B}{\mathcal B}
\newcommand{\C}{\mathbb C}
\newcommand{\HH}{\mathcal H}
\newcommand{\LL}{\mathcal L}\newcommand{\LLL}{\LL\LL}
\newcommand{\N}{\mathbb N}
\newcommand{\PP}{\mathcal P}
\newcommand{\Q}{\mathcal Q}
\newcommand{\R}{\mathbb R}
\newcommand{\SSS}{\mathcal M}
\newcommand{\Ga}{\Gamma}
\newcommand{\vp}{\varphi}
\newcommand{\si}{\sigma}
\newcommand{\Om}{\Omega}
\begin{document}
\noindent
\bibliographystyle{plain}

\begin{frontmatter}

\title
{Joint numerical ranges, quantum maps,\\
 and joint numerical shadows}
\author[umk,impan]{Eugene Gutkin}
\author[uj,cft]{Karol~{\.Z}yczkowski}
\address[umk]{Department of Mathematics, Nicolaus Copernicus University,
Chopina 12/18, 87-100 Toru{\'n}, Poland}
\address[impan]{Institute of Mathematics, Polish Academy of Sciences,
\'Sniadeckich 8, 00-956 War\-sza\-wa 10, Poland}
\address[uj]{Instytut Fizyki im. Smoluchowskiego, Uniwersytet
Jagiello{\'n}ski, Reymonta 4, 30-059 Krak{\'o}w, Poland }
\address[cft]{Centrum Fizyki Teoretycznej, Polska Akademia Nauk, Aleja Lotnik{\'o}w
32/44, 02-668 War\-sza\-wa, Poland}
%

\begin{abstract}
We associate with a $k$-tuple of hermitian $N\times N$ matrices a probability measure
on $\R^k$ supported on their joint numerical range: The joint numerical shadow of these matrices.
When $k=2$ we recover the numerical range and the numerical shadow of the complex matrix corresponding to
a pair of hermitian matrices. We apply this material to the theory of quantum information.
Thus, we show that quantum maps on the set of quantum states defined
by Kraus operators satisfying  the identity resolution assumption shrink joint numerical ranges.


\end{abstract}



\begin{keyword}
joint numerical range \sep joint numerical shadow \sep affine equivalence \sep linear projection
\sep quantum state \sep quantum map \sep qubit \sep decaying channel \sep qutrit \sep double flip channel

\MSC{47A12 \sep 81P16 \sep 51M15 }
\end{keyword}


\date{\today}


\end{frontmatter}


\section{Introduction}
\label{intro}

Let ${\mathcal H}_N$ be the $N$-dimensional Hilbert space with the scalar product $\langle \vp|\psi\rangle$,
and let $A$ be an operator on ${\mathcal H}_N$. Its {\em numerical range} (also called the {\em field of values})
$W(A)\subset\C$ is the set of numbers $z=\langle \psi|A|\psi\rangle$, where $\langle\psi|\psi\rangle=1$ \cite{HJ2,GR97}.
A crucial fact, conjectured by O. Toeplitz in 1918
and proved by F. Hausdorff in 1919 \cite{Toeplitz,Hausdorff} is that $W(A)$ is convex.
See \cite{Gu04} for an exposition of \cite{Toeplitz,Hausdorff} from a modern perspective. 
If $A$ is a normal operator,\footnote{Recall that an operator $A$ is normal if $AA^*=A^*A$.} 
then $W(A)$ is the convex hull of the spectrum of $A$, hence a convex polygon. For generic $A$ the boundary
$\bo W(A)$ is smooth \cite{JAG98}.
If  $N=2$, $\bo W(A)$ is a (possibly degenerate) ellipse \cite{Toeplitz, Li96}.
See \cite{KRS97} for $W(A)$ when $N=3$.

\medskip

The present work is motivated by the applications of numerical range
and related material in quantum mechanics, especially
in the theory of quantum information \cite{KPLRS09,SHDHG08,GPMSCZ09}.
We will freely use the relevant physics terminology in what follows.
The set $\Omega_N$ of {\em pure quantum states}\footnote{These are the hermitian projections
of $\HH_N$ onto one-dimensional subspaces $\C|\psi\rangle$.}   
is naturally isomorphic to the complex projective space ${\C}P^{N-1}$. 
{\em Quantum states}\footnote{Also called {\em density matrices}.} are the operators 
$\rho\ge 0$ on ${\mathcal H}_N$ satisfying Tr$\rho=1$. The
set ${\mathcal Q}_N$ of quantum states is convex, and $\Omega_N\subset {\mathcal Q}_N$ is
the set of extremal points of ${\mathcal Q}_N$. The elements of ${\mathcal Q}_N\setminus\Omega_N$ are
{\em mixed quantum states}.

\medskip

For applications to quantum information it is crucial that
$W(A)$ is a plane projection of $\Omega_N$ \cite{DGHMPZ11}. If $N=2$ (i.e., the {\em one qubit case}), 
$\Omega_2={\C}P^1$  is the {\em Bloch sphere}
and ${\mathcal Q}_2\subset{\mathbf R}^3$ is the {\em Bloch ball}. The fact that a projection of the two-sphere 
is a (possibly degenerate) ellipse underlies the well known claims about
numerical ranges of $2\times 2$ matrices \cite{Li96}. The  {\em numerical shadow} of
an operator $A$ on ${\mathcal H}_N$ is a probability distribution $P_A(z)$ supported on $W(A)$ \cite{Z2009,GS10,DGHPZ11}.
Let $|\psi\rangle$ be distributed on the unit sphere $S({\mathcal H}_N)$ according to
the {\em Haar measure}. Then $P_A(z)$ is the probability density of $z=\langle \psi|A|\psi\rangle$. In the one
qubit case, $P_A(z)$ is the density of the plane shadow of the Bloch sphere under a light
beam~~\cite{Z2009}.

\medskip

Let $A_1, \dots A_m$ be hermitian operators on $\HH_N$. Their {\em joint numerical range}
(JNR) \cite{GJK04} is the set in $\R^m$ defined by
\begin{equation}  \label{JNR1}
W(A_1,\dots, A_m)\; = \;
\{(
\langle \psi|A_1|\psi\rangle, \dots,
\langle\psi|A_m|\psi\rangle\,):\langle\psi|\psi\rangle=1\}.
\end{equation}
Since $W(A_1,A_2)=W(A_1+iA_2)$, equation~~\eqref{JNR1} generalizes the notion of the numerical range
of a complex operator. We note that $W(A_1,\dots,A_m)$ is not necessarily
convex for $m>2$ \cite{GJK04}. For instance, the 
JNR of {\em Pauli matrices} is the Bloch sphere; see section~~\ref{joint}.

\medskip

We study the above notions and the relationships between them and quantum maps. 
Theorem~~\ref{jnr_thm} in section~~\ref{joint} shows that the JNR of an $m$-tuple
of hermitian operators is a linear projection of the set of pure quantum states to $\R^m$.
In section~~\ref{num_shad} we associate with any $m$-tuple
of hermitian operators a probability measure on ${\mathbf R}^m$. This is the {\em joint numerical shadow}
of the $m$-tuple of hermitian operators; it extends the concept of numerical shadow of a 
complex operator~~\cite{Z2009,GS10,DGHPZ11,P+12}. We point out a few basic properties
of joint numerical shadows, deferring a deeper study to a separate publication.

Let $\Phi$ be the {\em quantum map} on the set of quantum states defined by a $k$-tuple of 
{\em Kraus operators} satisfying  the {\em identity resolution}~~\eqref{ident_res_eq}.
In section~~\ref{maps} we study the effects of $\Phi$ on $m$-tuples of hermitian operators.
As Corollaries~~\ref{quant_map_cor} and~~\ref{num_range_cor} show, $\Phi$ shrinks the joint numerical ranges.

\medskip

Throughout the paper we emphasize the applications of our material in the theory of quantum information.
Examples 1, 2, 3, 4, 5 and 6 illustrate these  applications. For instance, examples 3 and 4 show
the shrinking of numerical ranges under particularly well known quantum maps in the qubit and the qutrit cases.

\section{Joint numerical ranges}    \label{joint}
Let $\LL_N$ (resp. $\SSS_N$,
$\PP_N$, $\Q_N$) denote the space of all (resp. hermitian, positive
definite, positive definite with trace $1$) linear operators on
${\HH}_N$. Let $\Omega_N\subset\Q_N$ be the set of rank one projections.
As vector spaces, $\LL_N=\C^{N^2}, \SSS_N=\R^{N^2}$. The scalar product
\begin{equation} \label{hilb_schm_eq}
(A,B)=\tr(AB^*),
\end{equation}
makes $\LL_N$ (resp. $\SSS_N\subset\LL_N$) a Hilbert space (resp. Euclidean space).
The set $\PP_N\subset\SSS_N$ is a
closed convex cone. Its interior consists of strictly positive operators,
$\rho>0$, and its apex is the zero operator. The set $\Q_N$ is
the intersection of  $\PP_N$ and the hyperplane $\{\Tr\rho=1\}$. It
is a bounded convex region (i.e., has nonempty
interior) in the $(N^2-1)$-dimensional affine space, and
$\Om_N\subset\Q_N$ is  the set of its extremal points.
For a unit vector $\psi\in\HH_N$ we set $\rho_{\psi}=|\psi\rangle\langle\psi|\in\Om_N$.
The manifold $\C P^{N-1}$ is the quotient of the unit sphere $S(\HH_N)$ by the natural
linear action of the unit circle $S^1\subset\C$.
The map $\psi\mapsto\rho_{\psi}$ yields an isomorphism of $\C P^{N-1}$ and $\Om_N$.
In what follows we will identify $\C P^{N-1}$ and $\Om_N$ via this isomorphism.

\medskip

Let $A_1,\dots,A_m\in\SSS_N$ be arbitrary. The mapping from $\C P^{N-1}$ to $\R^m$ given by
\begin{equation}    \label{jnr_map_eq}
{\rm jnr}_{(A_1,\dots,A_m)}:|\psi\rangle \mapsto (\langle\psi|A_1|\psi\rangle,\dots,\langle\psi|A_m|\psi\rangle).
\end{equation}
is the {\em joint numerical range map}. The range
$W(A_1,\dots,A_m)$ of this map is the {\em
joint numerical range} (JNR) of operators $A_1,\dots,A_m$. By the isomorphism
$\C P^{N-1}\simeq\Om_N$, we have  ${\rm jnr}_{(A_1,\dots,A_m)}:\Om_N\to\R^m$.

\medskip

We will recall a few basic notions in affine geometry. By a vector space we
will mean a finite dimensional real vector space. Let $V$ be a vector space.
A set $H\subset V$ is an {\em affine subspace} if there is a linear subspace $H_0\subset V$
and a vector $h_0\in H$ such that $H=H_0+h_0$. Let $G\subset U,H\subset V$ be affine subspaces.
Let $G_0\subset U,H_0\subset V$ be the corresponding linear subspaces. A map $Af:G\to H$ is an
{\em affine isomorphism} if there is a linear isomorphism  $Af_0:G_0\to H_0$ and vectors
$h_0\in H,g_0\in G$ so that
$$
Af(g)=Af_0(g-g_0)+h_0.
$$
\begin{defi}     \label{affine_equi_def}
Let $U,V$ be vector spaces, and let $X\subset U,Y\subset V$ be arbitrary sets.
They  are {\em affinely isomorphic} if there exist affine subspaces $H\subset
U,G\subset V$ such that $X\subset G,Y\subset H$, and an affine isomorphism
$A:G\to H$ such that $A(X)=Y$. The induced map $A:X\to Y$ is an affine isomorphism of $X$ onto $Y$.
\end{defi}
Note that linear isomorphism of sets are the special cases in the above setting
when the subspaces and the maps in question are, actually, linear.
We will not distinguish between the linear and affine situations in what follows.

\medskip

We will now expose a general topic in linear algebra. Let $U,V$ be vector spaces. We assume that
$U$ is a Euclidean space with the scalar product $(\cdot,\cdot)$. Let $m\ge 1$. With any nonzero
vectors $u_1,\dots,u_m\in U$ and $v_1,\dots,v_m\in V$ we associate a linear operator $L:U\to V$ by
\begin{equation}   \label{operator_eq}
L(u)=\sum_{i=1}^m(u,u_i)v_i.
\end{equation}
Let $\A\subset U$ (resp. $\B\subset V$) be the subspace spanned by $u_1,\dots,u_m$ (resp.
$v_1,\dots,v_m$). We will need a simple lemma about the operator
in equation~~\eqref{operator_eq}.\footnote{We leave the straightforward proof to the reader.}
\begin{lemma}  \label{main_lem}
1. Let $\dim\A=p\le m$. Assume without loss of generality that the vectors
$u_1,\dots,u_p$ span $\A$.
Then there are vectors $\tv_1,\dots,\tv_p\in\B$ such that the linear operator in equation~~\eqref{operator_eq}
satisfies
\begin{equation}   \label{operator_eq1}
L(u)=\sum_{i=1}^p(u,u_i)\tv_i.
\end{equation}
2. Let $\dim\B=q\le m$. Assume without loss of generality that the vectors
$v_1,\dots,v_q$ span $\B$. Then there are vectors $\tu_1,\dots,\tu_q\in\A$ such that
the linear operator in equation~~\eqref{operator_eq} satisfies
\begin{equation}   \label{operator_eq2}
L(u)=\sum_{i=1}^q(u,\tu_i)v_i.
\end{equation}
3.  If $p=m$ (resp. $q=m$) then $\tv_i=v_i$ (resp. $\tu_i=u_i$) for $1\le i \le m$.
\end{lemma}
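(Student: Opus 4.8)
The plan is to prove all three statements by a single elementary mechanism: express the redundant vectors in terms of the chosen spanning subset, then redistribute the terms of the defining sum~\eqref{operator_eq} using bilinearity of the scalar product on $U$. Part 3 will emerge automatically as the degenerate case in which there are no redundant vectors to eliminate.

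For part 1, I would begin from the hypothesis that $u_1,\dots,u_p$ is a basis of $\A$. Since each of $u_{p+1},\dots,u_m$ lies in $\A$, it admits an expansion $u_j=\sum_{i=1}^p c_{ji}u_i$ with real coefficients $c_{ji}$. Pairing this with an arbitrary $u\in U$ and using linearity of $(\cdot,\cdot)$ in its first argument gives $(u,u_j)=\sum_{i=1}^p c_{ji}(u,u_i)$. Substituting this into $L(u)=\sum_{i=1}^m(u,u_i)v_i$, splitting the sum into the ranges $1\le i\le p$ and $p<j\le m$, and collecting the coefficient of each $(u,u_i)$, I obtain $L(u)=\sum_{i=1}^p(u,u_i)\tv_i$ with $\tv_i=v_i+\sum_{j=p+1}^m c_{ji}v_j$. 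Each $\tv_i$ is a linear combination of $v_1,\dots,v_m$, hence lies in $\B$, which is precisely~\eqref{operator_eq1}.

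For part 2 I would run the symmetric argument on the $V$ side. Writing $v_1,\dots,v_q$ for the chosen basis of $\B$, I expand the remaining vectors as $v_j=\sum_{i=1}^q d_{ji}v_i$ for $j>q$. Substituting into $L(u)$ and this time collecting the coefficient of each $v_i$ yields $L(u)=\sum_{i=1}^q\bigl[(u,u_i)+\sum_{j=q+1}^m d_{ji}(u,u_j)\bigr]v_i$. The one additional step here is to fold the bracket back into a single scalar product: using linearity of $(\cdot,\cdot)$ in its second argument together with the reality of the $d_{ji}$, the bracket equals $(u,\tu_i)$ with $\tu_i=u_i+\sum_{j=q+1}^m d_{ji}u_j\in\A$, giving~\eqref{operator_eq2}. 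For part 3, I would simply observe that when $p=m$ (resp. $q=m$) the spanning subset is the full list, the index range $p<j\le m$ (resp. $q<j\le m$) is empty, the correction sums vanish, and the constructions collapse to $\tv_i=v_i$ (resp. $\tu_i=u_i$).

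There is no genuine obstacle; the only point requiring care is keeping track of the two roles of bilinearity — linearity in the first slot drives part 1, whereas part 2 additionally needs linearity in the second slot to repackage the collected coefficients as scalar products against the modified vectors $\tu_i$. Everything else is routine reindexing.
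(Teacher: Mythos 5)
Your proof is correct. The paper itself offers no proof of this lemma (the footnote reads ``We leave the straightforward proof to the reader''), and your argument --- expanding the redundant vectors $u_j$ (resp.\ $v_j$) in the chosen spanning set, substituting into~\eqref{operator_eq}, and regrouping via bilinearity of the scalar product (valid in both slots, since $U$ is a \emph{real} Euclidean space) --- is exactly the straightforward argument the authors intend, with Part 3 falling out as the case of an empty correction sum, as you observe.
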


The proposition below is immediate from Lemma~~\ref{main_lem}.
\begin{proposition}  \label{operator_prop}
Let the setting be as in Lemma~~\ref{main_lem}, and let $Pr_{\A}:U\to\A$ be the orthogonal projection.
Then i) There is a unique linear operator $M:\A\to\B$ such that $L=M\circ Pr_{\A}$;
ii) If $\dim\A =m$ (resp. $\dim\B =m$) then $M$ is surjective (resp. injective); iii)  If $\dim\A=\dim\B=m$
then $M$ is an isomorphism.
\end{proposition}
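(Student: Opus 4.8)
The plan is to derive everything directly from Lemma~\ref{main_lem}, since the proposition is explicitly advertised as an immediate consequence. The operator $L:U\to V$ of equation~\eqref{operator_eq} factors through $\A$ because, by construction, $L(u)$ depends on $u$ only through the inner products $(u,u_i)$ with the spanning vectors $u_i\in\A$. Concretely, write $u=Pr_{\A}(u)+w$ with $w\in\A^{\perp}$; then $(u,u_i)=(Pr_{\A}(u),u_i)$ for each $i$, so $L(u)=L(Pr_{\A}(u))$. This identity is the whole content of the factorization: it forces any candidate $M$ to satisfy $M=L|_{\A}$, giving both existence and uniqueness in part~i). One defines $M:\A\to\B$ as the restriction of $L$ to $\A$ (noting $L(\A)\subseteq\B$ since every value of $L$ is a combination of the $v_i$), and then $L=M\circ Pr_{\A}$ holds by the displayed computation. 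Uniqueness follows because $Pr_{\A}$ is surjective onto $\A$, so $M\circ Pr_{\A}=M'\circ Pr_{\A}$ implies $M=M'$.

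For part~ii) I would invoke the two halves of Lemma~\ref{main_lem} according to which dimension equals $m$. Suppose $\dim\B=m$, so $q=m$ and, by part~3 of the lemma, the representation~\eqref{operator_eq2} reduces to $L(u)=\sum_{i=1}^m(u,\tu_i)v_i$ with $v_1,\dots,v_m$ linearly independent spanning $\B$. To show $M$ is injective, take $a\in\A$ with $M(a)=0$; since $M(a)=L(a)=\sum_i(a,\tu_i)v_i$ and the $v_i$ are independent, all coefficients $(a,\tu_i)$ vanish. Because the $\tu_i\in\A$ span $\A$ (their number equals $\dim\A$ by the matching of the two representations), $a$ is orthogonal to all of $\A$, forcing $a=0$. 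Dually, if $\dim\A=m$ then $p=m$ and representation~\eqref{operator_eq1} uses the full independent list $u_1,\dots,u_m$; surjectivity of $M$ onto $\B$ follows because $M(\A)=L(\A)$ already contains every $v_i$ (take $u=u_j$ and use that one can solve for individual $v_j$ from the invertible Gram-type relations among the independent $u_i$), so $M(\A)=\B$.

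Part~iii) is then formal: if $\dim\A=\dim\B=m$, parts~ii) give $M$ both injective and surjective between spaces of equal dimension $m$, hence an isomorphism. The main obstacle, such as it is, lies in part~ii): one must be careful that the factored operator $M$ really inherits the spanning/independence bookkeeping from Lemma~\ref{main_lem}, rather than re-proving the lemma. The cleanest route is to keep $M$ defined abstractly as $L|_{\A}$ throughout and feed it the appropriate representation~\eqref{operator_eq1} or~\eqref{operator_eq2} depending on whether $\A$ or $\B$ has full dimension $m$; this avoids juggling two simultaneous descriptions and isolates the single linear-independence argument that drives injectivity and its dual for surjectivity.
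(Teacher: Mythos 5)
Your proof is correct and follows exactly the route the paper intends: the paper offers no written proof, stating only that the proposition is ``immediate from Lemma~\ref{main_lem}'', and your argument --- the factorization $L(u)=L(Pr_{\A}(u))$ for part i), linear independence of the $v_i$ (resp.\ invertibility of the Gram matrix of the $u_i$) for part ii), and the formal combination for part iii) --- is precisely that derivation spelled out. One small repair: in the injectivity step, your parenthetical reason that the $\tu_i$ span $\A$ (``their number equals $\dim\A$'') is not right, since their number is $m$, which can exceed $\dim\A$; the correct reason is part~3 of the lemma, which you already invoked: when $\dim\B=m$ one has $\tu_i=u_i$, and the $u_i$ span $\A$ by definition.
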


Let $G,H\subset U$ be affine subspaces in a vector space. We say that they are {\em parallel}
if $H=G+u_0$ for some $u_0\in U$.

\begin{corollary}  \label{operator_cor}
Let $U,V$ be vector spaces, and let $L:U\to V$ be as in equation~~\eqref{operator_eq}.
Let $\A\subset U,\B\subset V$ be the subspaces spanned by the vectors $u_1,\dots,u_m$
and $v_1,\dots,v_m$ respectively. Let $M:\A\to\B$ be the operator from Proposition~~\ref{operator_prop}.

Let $G\subset U$ be an affine subspace containing a subspace parallel to $\A$. Let $\Ga\subset G$ be an arbitrary set.
If $M$ is injective, then $Pr_{\A}(\Ga)$ and $L(\Ga)$ are affinely isomorphic.
\end{corollary}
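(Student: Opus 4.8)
The plan is to derive everything from the factorization $L = M \circ Pr_{\A}$ supplied by Proposition~\ref{operator_prop}(i), and to upgrade it to a genuine isomorphism using the injectivity of $M$. First I would record the immediate consequence $L(\Ga) = M\bigl(Pr_{\A}(\Ga)\bigr)$; this holds for the arbitrary set $\Ga$ simply because $L = M \circ Pr_{\A}$ as maps on all of $U$, hence in particular on $\Ga$. This already exhibits $L(\Ga)$ as the image of $Pr_{\A}(\Ga)$ under the single linear map $M$, so the only thing left is to see that this image relation is in fact an affine isomorphism in the sense of Definition~\ref{affine_equi_def}.

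Next I would invoke the hypothesis that $M$ is injective. Since $M:\A\to\B$ is linear and injective, it restricts to a linear isomorphism $M:\A\to M(\A)$ onto its image, and $M(\A)$ is a linear, in particular affine, subspace of $V$. I would then take $\A\subset U$ and $M(\A)\subset V$ as the two ambient affine subspaces required by Definition~\ref{affine_equi_def}: we have $Pr_{\A}(\Ga)\subset\A$ by the very construction of the orthogonal projection, and $L(\Ga) = M\bigl(Pr_{\A}(\Ga)\bigr)\subset M(\A)$ by the first step. The linear isomorphism $M:\A\to M(\A)$ is a fortiori an affine isomorphism, and it carries $Pr_{\A}(\Ga)$ exactly onto $L(\Ga)$. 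By Definition~\ref{affine_equi_def} this is precisely the assertion that $Pr_{\A}(\Ga)$ and $L(\Ga)$ are affinely isomorphic.

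The hypothesis that $G$ contains a subspace parallel to $\A$ plays no role in producing the isomorphism itself. Its purpose is only to locate $Pr_{\A}(\Ga)$ correctly: since $\A$ is contained in the direction space of $G$, the projection $Pr_{\A}$ sends $G$ onto all of $\A$, so that $Pr_{\A}(\Ga)$ is the honest orthogonal shadow of $\Ga$ inside the full subspace $\A$ rather than inside some proper affine piece of it. I would state this explicitly so that, in the intended application with $\Ga=\Om_N$, $L$ the joint numerical range map, and $M(\A)=\R^m$, the corollary reads as the statement that the joint numerical range is affinely isomorphic to a genuine orthogonal projection of the set of pure states.

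I do not anticipate a serious obstacle: the entire content is the factorization of Proposition~\ref{operator_prop} combined with the elementary fact that an injective linear map is an isomorphism onto its image. The only point needing care is the bookkeeping against Definition~\ref{affine_equi_def} — naming the two affine subspaces ($\A$ and $M(\A)$) and the affine map ($M|_{\A}$) correctly, and verifying the containments $Pr_{\A}(\Ga)\subset\A$ and $L(\Ga)\subset M(\A)$ — rather than any analytic or geometric difficulty.
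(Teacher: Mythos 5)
Your proof is correct, but it is not the route the paper takes, and the difference is substantive. The paper's proof chooses $u_0$ with (as written) $\A=G+u_0$, translates $\Ga$ into $\A$ so that $Pr_{\A}$ acts there as the identity, deduces $L(\Ga+u_0)=M\,Pr_{\A}(\Ga+u_0)=M(\Ga+u_0)\subset\B_0$ (with $\B_0$ the range of $M$), and exhibits $u\mapsto L(u+u_0)$ as the affine isomorphism; the passage back to $Pr_{\A}(\Ga)$ and $L(\Ga)$ goes through translations that are left implicit. You instead apply the factorization $L=M\circ Pr_{\A}$ globally to get $L(\Ga)=M\bigl(Pr_{\A}(\Ga)\bigr)$, use injectivity to view $M$ as a linear isomorphism of $\A$ onto $M(\A)$ carrying $Pr_{\A}(\Ga)$ onto $L(\Ga)$, and then check Definition~\ref{affine_equi_def} with ambient subspaces $\A$ and $M(\A)$. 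Your version never uses the hypothesis on $G$, and this is a genuine gain rather than an oversight: the paper's argument, read literally, needs $G$ to be a translate of $\A$ (otherwise $\Ga+u_0\not\subset\A$ and the middle equality $Pr_{\A}(\Ga+u_0)=\Ga+u_0$ fails), which is stronger than the stated hypothesis ``containing a subspace parallel to $\A$'' and is false in the intended application to Theorem~\ref{jnr_thm}, where $G=\{X\in\SSS_N:\tr(X)=1\}$ has dimension $N^2-1$, strictly larger than $\dim\A=m$ in general. So your argument is both more general (it works for an arbitrary $\Ga\subset U$, with the hypothesis on $G$ serving only the geometric interpretation, as you note) and is the one that actually supports the deduction of Theorem~\ref{jnr_thm}. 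What the paper's route buys, in the special case where $G$ really is parallel to $\A$, is the slightly stronger geometric statement that $L$ restricted to $G$ is itself an affine embedding --- the picture behind ``the joint numerical range is a projection of $\Om_N$'' --- whereas your argument identifies only the two image sets, which is all the corollary asserts.
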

\begin{proof}
Let $u_0\in U$ be such that $\A=G+u_0$. Let $\B_0\subset\B$ be the range of $M$.
Thus, $M:\A\to\B_0$ is a linear isomorphism.
By Proposition~~\ref{operator_prop}, we have
$$
L(\Ga+u_0)=M\,Pr_{\A}(\Ga+u_0)=M(\Ga+u_0)\subset\B_0.
$$
The mapping $u\mapsto L(u+u_0)$ induces an affine isomorphism of $G$ and $\B_0$.
\end{proof}

We will now apply the above material to joint numerical ranges.

\begin{thm}        \label{jnr_thm}
Let $A_1,\dots,A_m\in\SSS_N$ be traceless, linearly independent hermitian operators.
Let $\A\subset\SSS_N$ be the subspace spanned by them. Then i) The
joint numerical range of $A_1,\dots,A_m$ is affinely isomorphic to
$Pr_{\A}(\Om_N)$; ii) The convex hull of the joint numerical range
of $A_1,\dots,A_m$ is affinely isomorphic to $Pr_{\A}(\Q_N)$.
\end{thm}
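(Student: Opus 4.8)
The plan is to realize the joint numerical range map as the restriction to $\Om_N$ of one of the linear operators $L$ studied in Proposition~\ref{operator_prop}, after which both parts become essentially formal. The decisive observation is the linearization $\langle\psi|A_i|\psi\rangle=\tr(\rho_\psi A_i)=(\rho_\psi,A_i)$, valid because each $A_i$ is hermitian; here $\rho_\psi=|\psi\rangle\langle\psi|$ and $(\cdot,\cdot)$ is the Hilbert--Schmidt product~\eqref{hilb_schm_eq} on $\SSS_N$. This converts an expression that is quadratic in $\psi$ into a function that is linear in $\rho_\psi\in\Om_N$, so that the joint numerical range map factors through the identification $\C P^{N-1}\simeq\Om_N$.

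Concretely, I would take $U=\SSS_N$ (a Euclidean space under~\eqref{hilb_schm_eq}), $V=\R^m$, $u_i=A_i$, and $v_i=e_i$ (the standard basis of $\R^m$), and let $L$ be the operator of~\eqref{operator_eq}. Then $L(\rho_\psi)=(\langle\psi|A_1|\psi\rangle,\dots,\langle\psi|A_m|\psi\rangle)$, whence $L(\Om_N)=W(A_1,\dots,A_m)$. Here $\A=\mathrm{span}(A_1,\dots,A_m)$ and $\B=\mathrm{span}(e_1,\dots,e_m)=\R^m$. Linear independence of the $A_i$ gives $\dim\A=m=\dim\B$, so by Proposition~\ref{operator_prop}(iii) the induced operator $M:\A\to\B$ is an isomorphism, in particular injective, and $L=M\circ Pr_{\A}$.

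For part i) I would apply Corollary~\ref{operator_cor} with $\Ga=\Om_N$ and $G$ the affine hyperplane $\{\rho:\Tr\rho=1\}$ containing $\Om_N$. Its direction space is the space $\SSS_N^{0}$ of traceless hermitian operators, and the hypothesis that $A_1,\dots,A_m$ are traceless is exactly what guarantees $\A\subset\SSS_N^{0}$, i.e.\ that $G$ contains a subspace parallel to $\A$. Since $M$ is injective, the corollary then yields that $Pr_{\A}(\Om_N)$ and $L(\Om_N)=W(A_1,\dots,A_m)$ are affinely isomorphic, which is i). For part ii) I would run the identical argument with $\Ga=\Q_N\subset G$, obtaining that $Pr_{\A}(\Q_N)$ is affinely isomorphic to $L(\Q_N)$. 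It then remains to identify $L(\Q_N)$ with the convex hull of the JNR: since $L$ is affine and $\Q_N=\conv(\Om_N)$, because $\Om_N$ is the set of extreme points of the convex body $\Q_N$, one has $L(\Q_N)=L(\conv(\Om_N))=\conv(L(\Om_N))=\conv W(A_1,\dots,A_m)$.

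I expect the genuine content to lie not in any single hard step but in assembling the correct dictionary: recognizing the JNR map as $L$ through the Hilbert--Schmidt linearization, and checking that the traceless hypothesis is precisely the condition placing $\Om_N$ and $\Q_N$ inside an affine subspace whose direction contains $\A$, so that Corollary~\ref{operator_cor} is applicable. The one point demanding care is the interchange of $L$ with the convex-hull operation in part ii); this is legitimate because affine maps send convex hulls to convex hulls, but it must be invoked explicitly together with the identity $\Q_N=\conv(\Om_N)$.
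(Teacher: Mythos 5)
Your proof is correct and follows essentially the same route as the paper's: the same dictionary $U=\SSS_N$, $V=\R^m$, $u_i=A_i$, $v_i=e_i$ realizing the JNR map as $L$ of equation~\eqref{operator_eq}, Proposition~\ref{operator_prop}(iii) making $M$ an isomorphism via linear independence, and Corollary~\ref{operator_cor} applied with the hyperplane $G=\{X\in\SSS_N:\tr(X)=1\}$, whose direction space contains $\A$ precisely because the $A_i$ are traceless. The only (immaterial) difference is in part ii): you reapply the corollary with $\Ga=\Q_N$ and then commute $L$ with convex hulls, whereas the paper takes convex hulls of the affine isomorphism from part i); both variants rest on the same two facts, that affine maps preserve convex hulls and that $\Q_N=\conv(\Om_N)$.
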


\begin{proof}
Let $U=\SSS_N,V=\R^m$. Let $v_1,\dots,v_m$ be the standard basis in $\R^m$.
Then the map $jnr_{A_1,\dots,A_m}$ has the form equation~~\eqref{operator_eq} with $u_i=A_i$.
By Proposition~~\ref{operator_prop}, the map $M:\A\to\R^m$ is a linear isomorphism.
Set $G=\{X\in\SSS_N:\tr(X)=1\}$. Then $G\subset\SSS_N$ is an affine hyperplane containing the
affine subspace $\A+\1_N$. Claim i) now follows from  Corollary~~\ref{operator_cor}.

The convex hulls of affinely isomorphic sets are affinely isomorphic. Hence,
claim i) implies claim ii).
\end{proof}

\begin{corollary}        \label{jnr_cor}
Let the setting be as in Theorem~~\ref{jnr_thm}, with $m=N^2-1$.
Then $W(A_1,\dots,A_m)$ is affinely isomorphic to
$\Om_N$ and $\conv[W(A_1,\dots,A_m)]$ is affinely isomorphic to
$\Q_N$.
\end{corollary}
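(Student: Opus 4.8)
The plan is to deduce this directly from Theorem~\ref{jnr_thm} by pinning down the span $\A$ in the special case $m = N^2-1$ and then checking that the orthogonal projection $Pr_{\A}$ acts on the sets of interest as nothing more than a translation. First I would count dimensions: the traceless operators form a subspace of $\SSS_N$ cut out by the single nonzero linear functional $X \mapsto \tr X$, hence of dimension $N^2-1$. Since $A_1,\dots,A_m$ are $m = N^2-1$ linearly independent traceless hermitian operators, they must span all of it, so $\A = \{X \in \SSS_N : \tr X = 0\}$.

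Next I would identify $Pr_{\A}$ explicitly. With respect to the scalar product~\eqref{hilb_schm_eq} one has $(X,\1_N) = \tr X$, so $\R\1_N$ is precisely the orthogonal complement of $\A$, giving $Pr_{\A}(X) = X - (\tr X/N)\,\1_N$. The crucial observation is that both $\Om_N$ and $\Q_N$ lie in the affine hyperplane $\{X : \tr X = 1\}$, on which this formula reduces to the single translation $X \mapsto X - \frac{1}{N}\1_N$. A translation is an affine isomorphism, so $Pr_{\A}(\Om_N)$ is affinely isomorphic to $\Om_N$ and $Pr_{\A}(\Q_N)$ is affinely isomorphic to $\Q_N$.

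Finally I would chain these isomorphisms with the ones provided by Theorem~\ref{jnr_thm}. Part i) gives that $W(A_1,\dots,A_m)$ is affinely isomorphic to $Pr_{\A}(\Om_N)$, hence to $\Om_N$; part ii) gives that $\conv[W(A_1,\dots,A_m)]$ is affinely isomorphic to $Pr_{\A}(\Q_N)$, hence to $\Q_N$. Since the composition of affine isomorphisms is again an affine isomorphism, this completes the argument.

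I do not expect a genuine obstacle here, as the corollary is a clean specialization of the theorem; the only point needing a moment's care is the second step, namely verifying that the restriction of $Pr_{\A}$ to the trace-one slice is a rigid translation rather than a more complicated affine map. This rests entirely on $\1_N$ being orthogonal to $\A$, which is immediate from $\tr A_i = 0$.
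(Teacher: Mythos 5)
Your proof is correct and follows essentially the same route as the paper: specialize Theorem~\ref{jnr_thm}, observe that for $m=N^2-1$ the span $\A$ is the full traceless subspace so the trace-one hyperplane is parallel to it, and conclude that $Pr_{\A}$ acts as an affine isomorphism on $\Om_N$ and $\Q_N$. Your explicit computation $Pr_{\A}(X)=X-(\tr X/N)\,\1_N$, reducing to a translation on the trace-one slice, is simply a fleshed-out version of the paper's one-line appeal to parallelism.
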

\begin{proof}
This is a special case of Theorem~~\ref{jnr_thm}. In this case
the affine hyperplane $G$ is parallel to $\A$. Hence $Pr_{\A}$ induces
an affine isomorphism of $\Om_N$ and $Pr_{\A}(\Om_N)$.
\end{proof}

\vspace{3mm}


\noindent{\bf Example 1}. Let $N=2$. The Pauli matrices
\begin{equation}   \label{pauli}
\sigma_1 =    \left[\begin{array}{cc}
                0  & 1 \\
                1  & 0
\end{array}\right], \ \
\sigma_2 =    \left[\begin{array}{cc}
                0  & -i \\
                i  & 0
\end{array}\right], \ \
\sigma_3 =    \left[\begin{array}{cc}
                1  &  0 \\
                0  & -1
\end{array}\right]
\end{equation}
form an orthogonal basis in the space of traceless Hermitian
$2\times 2$ matrices. By Corollary~~\ref{jnr_cor}, $W(\si_1,\si_2,\si_3)\subset\R^3$ is
the unit sphere. Indeed, this also holds by an elementary computation. Let $\psi=[z_1,z_2]$. 
Denote by $F:\C^2\to\R^3$ the joint numerical range map; (see equation~~\eqref{jnr_map_eq}).
Then 
$$
F(\psi)=[z_1\bar{z}_2+z_2\bar{z}_1,i(z_1\bar{z}_2-z_2\bar{z}_1),z_1\bar{z}_1+z_2\bar{z}_2],
$$
and the sum of squared coordinates is $(|z_1|^2+|z_2|^2)^2$.



Let $\rho$ be a $2\times 2$ density matrix. Its {\em
Bloch vector} $(\tau_1,\tau_2,\tau_3)=\tau\in{\mathbf R}^3$ is defined by the decomposition
\begin{equation}
\rho= \frac{1}{2} {\mathbf 1}_2 + \sum_{j=1}^3 \tau_j \sigma_j 
\label{bloch2}
\end{equation}
where $\tau_j  = \tr(\sigma_j\rho)/2$. Since $\rho=\rho^*$, we have $\tau\in{\mathbf R}^3$, 
and since $\rho \ge 0$
$$
\sqrt{\tau_1^2+\tau_2^2+\tau_3^2}=||\tau||\le 1/2.
$$
The equality $||\tau||=1/2$ holds if and only if $\rho$ is a pure
state. Thus, the map ${\rm jnr}_{(\sigma_1,\sigma_2,\sigma_3)}$
yields an isomorphism of $\Q_2$ (resp. $\Omega_2$) and the {\em
Bloch ball} (resp. {\em Bloch sphere}).

\medskip

Example 1 generalizes to $N>2$ as follows. Denote by $d=N^2-1$
the dimension of the space of traceles $N\times N$ hermitian matrices
with the scalar product~~\eqref{hilb_schm_eq}. Let
$\lambda_1,\dots,\lambda_d$ be an orthogonal, but not necessarily an orthonormal, basis. 
For instance, for $N=3$ the Gell--Mann matrices  $(\lambda_1,\dots,\lambda_8)$
satisfy the orthogonality relations $\tr(\lambda_j\lambda_k) = 2\delta_{jk}$ \cite{Schiff}.
Let $\rho$ be a $N\times N$ state. The counterpart of equation~~\eqref{bloch2} is
\begin{equation}
\rho= \frac{1}{N} {\mathbf 1}_N + \sum_{j=1}^{d} \tau_j \lambda_j.
\label{blochN}
\end{equation}
The generalized Bloch vector $\tau=\tau(\rho)$  has $d$
components $\tau_j  = {\rm Tr}(\rho \lambda_j)/{\rm Tr}(\lambda_j^2)$.
If the basis $\lambda_1,\dots,\lambda_d$ is orthonormal and $\rho_{\psi}=|\psi\rangle \langle \psi|$ is a pure state,
then $\tau_{\psi}=\tau(\rho_{\psi})$ satisfies $\tau_j  = \langle \psi |\lambda_j|\psi\rangle$.
Hence $\tau_{\psi}\in W(\lambda_1, \dots \lambda_d)$.

By Corollary~~\ref{jnr_cor}, the convex hull $\conv[W(\lambda_1, \dots \lambda_d)]$
is affinely isomorphic to the set $\Q_N$ of quantum states.
By Theorem \ref{jnr_thm}, if $\lambda_1, \dots, \lambda_m$
are linearly independent, then
the joint numerical range $ W(\lambda_1, \dots, \lambda_m)$
is affinely  isomorphic to a projection of $\Q_N$ to ${\mathbf R}^m$.
Compare with the results in~~\cite{GJK04,DGHMPZ11}.

\section{Quantum maps}      \label{maps}
We recall that the set $\LL_N$ of operators on ${\HH}_N$
is a Hilbert space with the scalar product given by equation~~\eqref{hilb_schm_eq}. 
Let $\LLL_N$ be the space of linear operators on $\LL_N$. For $\Phi\in\LLL_N$
we denote by $\Phi^*\in\LLL_N$ its adjoint.
For $X,Y\in\LL_N$ we define $\Phi_{X|Y}\in\LLL_N$ by
\begin{equation} \label{oper_hilb_schm_eq}
\Phi_{X|Y}(A)=XAY^*.
\end{equation}
Let $X_1,\dots,X_k\in\LL_N$ be arbitrary. We set
\begin{equation} \label{our_oper_eq}
\Phi_{(X_1,\dots,X_k)}=\sum_{i=1}^k\Phi_{X_i|X_i}.
\end{equation}
We will often use the notation $\Phi=\Phi_{(X_1,\dots,X_k)}$. In the physics literature the
transformations of $\PP_N$ defined by
\begin{equation}  \label{qmap}
\rho_1=\Phi(\rho)=\sum_{i=1}^k X_i \rho X_i^*.
\end{equation}
are called {\em quantum maps}. They
correspond to generalized quantum measurements with $k$ possible
outcomes \cite{BZ06}. Operators $X_i\in\LL_N$ are the
{\em measurement operators} or {\em Kraus operators}. Let $\Phi=\Phi_{(X_1,\dots,X_k)}$.
Then $\Phi_{(X_1^*,\dots,X_k^*)}=\Phi^*$, the adjoint operator with respect to the 
scalar product~~\eqref{hilb_schm_eq}. In the physics literature the quantum map
$\Phi^*$ is {\em dual to $\Phi$}. The duality
of quantum maps corresponds to the duality between the
Schr\"odinger and the Heisenberg representations in quantum
mechanics. In the former, the quantum states $\rho\in\PP_N$ evolve
via $\rho_1=\Phi(\rho)$, while the observables $A\in\SSS_N$ do
not. In the latter, the quantum states do not change and the
observables evolve by $A\mapsto A_1=\Phi^*(A)$.

\medskip

We will now establish a few basic properties of  quantum maps. The following lemma, whose proof is left
to the reader, will be used in Proposition~~\ref{misc_lem}.
\begin{lemma}   \label{basic_oper_prop}
{\em 1}. The operators $\Phi_{X|Y}$ span the vector space
$\LLL_N$.

\noindent{\em 2}. Let $X_1,Y_1,X_2,Y_2\in\LL_N$. Then
$$
\Phi_{X_2|Y_2}\Phi_{X_1|Y_1}=\Phi_{X_2X_1|Y_2Y_1}.
$$
\noindent{\em 3}. We have
$$
\left(\Phi_{X|Y}\right)^*=\Phi_{X^*|Y^*}.
$$
\end{lemma}
\begin{defi}     \label{ident_res_def}
We will denote by $\1\in\LL_N$ the identity operator.
Let $k\ge 1$. We say that the $k$-tuple of operators
$X_1,\dots,X_k\in\LL_N$ is an {\em identity resolution} if

\begin{equation}
\label{ident_res_eq}
\sum_{i=1}^k X_i X_i^*=\1.
\end{equation}
\end{defi}
\noindent The dual property
\begin{equation}
\label{dual_ident_res_eq}
\sum_{i=1}^k X_i^* X_i=\1,
\end{equation}
holds if and only if $X_1^*,\dots,X_k^*$ is an identity resolution.

\begin{propo}      \label{misc_lem}
Let $X_1,\dots,X_k\in\LL_N$ be arbitrary, and let
$\Phi=\Phi_{(X_1,\dots,X_k)}\in\LLL_N$. Then the following holds.

\noindent 1. The quantum map $\Phi:\LL_N\to\LL_N$ preserves
$\SSS_N$ and $\PP_N$.

\noindent 2. The operators $X_1,\dots,X_k$ satisfy
equation~~\eqref{ident_res_eq} if and only if $\Phi:\LL_N\to\LL_N$ preserves
$\1$.

\noindent 3. The operators $X_1,\dots,X_k$ satisfy
equation~~\eqref{dual_ident_res_eq} if and only if i) the map $\Phi$
preserves the trace; ii) the map $\Phi^*$ preserves $\1$.

\noindent 4. If $X_1,\dots,X_k$ satisfy
equation~~\eqref{dual_ident_res_eq}, then $\Phi$
preserves $\Q_N$.
\end{propo}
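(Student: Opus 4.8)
The plan is to prove the four claims of Proposition~\ref{misc_lem} in sequence, leaning on the computational rules in Lemma~\ref{basic_oper_prop} and the definitions of $\Phi$, $\Phi^*$, and identity resolution. All four statements are ultimately short computations once one has the right algebraic identities in hand, so the main work is organizing them and being careful with adjoints versus transposes of Kraus operators.

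For claim 1, I would note that for any $A\in\SSS_N$ we have $\Phi(A)^*=\sum_i (X_iAX_i^*)^*=\sum_i X_iA^*X_i^*=\Phi(A)$ since $A^*=A$, so $\Phi$ preserves hermiticity, i.e.\ $\Phi(\SSS_N)\subseteq\SSS_N$. For positivity, if $\rho\ge 0$ then each summand $X_i\rho X_i^*\ge 0$ (because $\langle\varphi|X_i\rho X_i^*|\varphi\rangle=\langle X_i^*\varphi|\rho|X_i^*\varphi\rangle\ge 0$), and a sum of positive operators is positive; hence $\Phi(\PP_N)\subseteq\PP_N$. For claim 2, I would simply evaluate $\Phi(\1)=\sum_i X_i\1 X_i^*=\sum_i X_iX_i^*$, which equals $\1$ precisely when \eqref{ident_res_eq} holds; this is a direct equivalence.

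For claim 3, the key is to identify $\Phi^*$ explicitly. By the remark following \eqref{qmap}, $\Phi^*=\Phi_{(X_1^*,\dots,X_k^*)}$, so $\Phi^*(A)=\sum_i X_i^*AX_i$. Then $\Phi^*(\1)=\sum_i X_i^*X_i$, which equals $\1$ exactly when \eqref{dual_ident_res_eq} holds; this gives part ii). For part i), trace-preservation of $\Phi$ means $\Tr\Phi(\rho)=\Tr\rho$ for all $\rho$. Writing $\Tr\Phi(\rho)=\sum_i\Tr(X_i\rho X_i^*)=\sum_i\Tr(X_i^*X_i\rho)=\Tr\bigl(\bigl(\sum_i X_i^*X_i\bigr)\rho\bigr)$ using cyclicity of the trace, I see this equals $\Tr\rho=\Tr(\1\cdot\rho)$ for all $\rho$ iff $\sum_i X_i^*X_i=\1$, again exactly \eqref{dual_ident_res_eq}. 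The only subtle point is the bidirectional ``for all $\rho$'' quantifier: equality of traces against every $\rho$ forces equality of the operators because the Hilbert--Schmidt pairing \eqref{hilb_schm_eq} is nondegenerate on $\SSS_N$.

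For claim 4, I would combine the earlier parts. If \eqref{dual_ident_res_eq} holds, then claim 3 gives that $\Phi$ preserves the trace, and claim 1 gives that $\Phi$ preserves $\PP_N$; since $\Q_N=\PP_N\cap\{\Tr\rho=1\}$, any $\rho\in\Q_N$ maps to $\Phi(\rho)\in\PP_N$ with $\Tr\Phi(\rho)=\Tr\rho=1$, hence $\Phi(\rho)\in\Q_N$. I expect the main obstacle to be bookkeeping rather than conceptual: one must keep straight that the \emph{identity resolution} \eqref{ident_res_eq} (via $\sum X_iX_i^*$) controls preservation of $\1$ by $\Phi$, whereas the \emph{dual} resolution \eqref{dual_ident_res_eq} (via $\sum X_i^*X_i$) controls trace-preservation and preservation of $\1$ by $\Phi^*$ — confusing the two would swap claims 2 and 3. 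The nondegeneracy argument in claim 3 part i) is the one genuinely non-mechanical step and deserves an explicit sentence.
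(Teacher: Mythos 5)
Your proof is correct and takes essentially the same route as the paper's: claim 1 via hermiticity and the identity $\langle X_i\rho X_i^*\varphi,\varphi\rangle=\langle \rho X_i^*\varphi,X_i^*\varphi\rangle$, claim 2 by direct evaluation of $\Phi(\1)$, claim 3 via the identification $\Phi^*=\Phi_{(X_1^*,\dots,X_k^*)}$ together with nondegeneracy of the Hilbert--Schmidt pairing (your cyclicity-of-trace computation is exactly the paper's identity $\Tr\left(\Phi(A)\right)=(A,\Phi^*(\1))$ unwound), and claim 4 by combining claims 1 and 3. No gaps.
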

\begin{proof}
It suffices to prove 1) for operators $\Phi=\Phi_{X|X}$.
The former property is
immediate from $(XAX^*)^*=XA^*X^*$, and the latter from $\langle
XAX^*v,v\rangle=\langle AX^*v,X^*v\rangle$. Claim 2) is immediate
from the definition of $\Phi$. We have
$$
\Tr\left(\Phi(A)\right)=(\Phi(A),\1)=(A,\Phi^*(\1)).
$$
Claim 3) now follows from Lemma~~\ref{basic_oper_prop}.
Claim 4) is immediate from 1) and 3).
\end{proof}

\medskip

Set $\Phi=\Phi_{(X_1,\dots,X_k)}$. We say that the quantum map $\Phi$ is {\em unital} (resp. {\em trace
preserving}) if $\Phi(\1)=\1$ (resp. $\Tr\left(\rho(\Phi)\right)=\Tr\Phi$).
By Proposition~~\ref{misc_lem}, a quantum map is unital (resp. trace
preserving) if and only if equation~~\eqref{ident_res_eq} (resp.
equation~~\eqref{dual_ident_res_eq}) is satisfied. Since
$$
{\rm Tr} \left( \Phi(\rho) A \right)= {\rm Tr} \left( \sum_i X_i
\rho X_i^* A \right)={\rm Tr} \left(\sum_i \rho X_i^* A X_i
\right) = {\rm Tr}\left( \rho \Phi^*(A) \right),
$$
$\Phi$ is trace preserving
(resp. unital) if and only if $\Phi^*$ is unital (resp. trace preserving).


For $\psi\in\HH_N$ we set $\rho_{\psi}=|\psi\rangle \langle \psi|$.
Then $\rho_{\psi}\in\PP_N$, and for any $a\in\C$ we have
\begin{equation}   \label{scalar_eq}
\rho_{a\psi}=|a|^2\rho_{\psi}.
\end{equation}
In particular, $\rho_{\psi}=0$ if and only if  $\psi=0$ and $\rho_{\psi}\in\Q_N$ if and only if $||\psi||=1$.
\begin{proposition}     \label{quant_map_thm}
Let $X_1,\dots,X_k\in\LL_N$, and let
$\Phi=\Phi_{(X_1,\dots,X_k)}$.  Let $\psi\in\HH_N$ be any vector.
For $1\le i \le k$ such that $X_i\psi\ne 0$, set
$\psi_i=X_i\psi/||X_i\psi||$.\\

\noindent Then
\begin{equation}    \label{map_explicit_eq}
\Phi(\rho_{\psi})=\sum_i||X_i\psi||^2\rho_{\psi_i}.
\end{equation}
\end{proposition}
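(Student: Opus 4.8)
The plan is to compute $\Phi(\rho_\psi)$ directly from the definition of the quantum map and the rank-one projection, reducing everything to an identity about how $\Phi_{X|X}$ acts on a rank-one projection. By equation~\eqref{our_oper_eq} we have $\Phi(\rho_\psi)=\sum_i \Phi_{X_i|X_i}(\rho_\psi)=\sum_i X_i\rho_\psi X_i^*$, so it suffices to understand a single term $X_i\rho_\psi X_i^*$.

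First I would write $\rho_\psi=|\psi\rangle\langle\psi|$ and use the elementary fact that $X_i|\psi\rangle\langle\psi|X_i^* = |X_i\psi\rangle\langle X_i\psi|=\rho_{X_i\psi}$, since conjugating the outer product $|\psi\rangle\langle\psi|$ by $X_i$ on the left and $X_i^*$ on the right just replaces $|\psi\rangle$ by $|X_i\psi\rangle$. Next I would invoke the scaling relation~\eqref{scalar_eq}: for the indices with $X_i\psi\neq 0$, setting $\psi_i=X_i\psi/\|X_i\psi\|$ gives $X_i\psi=\|X_i\psi\|\,\psi_i$, and hence $\rho_{X_i\psi}=\|X_i\psi\|^2\rho_{\psi_i}$. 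For indices with $X_i\psi=0$ the corresponding term vanishes, which is consistent with the convention in the statement that $\psi_i$ is only defined (and only summed) when $X_i\psi\neq 0$. Summing over $i$ then yields $\Phi(\rho_\psi)=\sum_i\|X_i\psi\|^2\rho_{\psi_i}$, which is exactly equation~\eqref{map_explicit_eq}.

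There is essentially no serious obstacle here; the result is a bookkeeping identity. The only point requiring a little care is the handling of the degenerate indices $X_i\psi=0$: one must check that these contribute nothing to the left-hand side so that restricting the right-hand sum to $\{i:X_i\psi\neq 0\}$ does not lose anything, and this follows immediately because $\rho_{X_i\psi}=0$ precisely when $X_i\psi=0$ by the remark following~\eqref{scalar_eq}. I would also note, for clarity, that the vectors $\psi_i$ are genuine unit vectors, so each $\rho_{\psi_i}\in\Omega_N$ is a pure state and the right-hand side exhibits $\Phi(\rho_\psi)$ as a nonnegatively weighted combination of pure states, with weights $\|X_i\psi\|^2$. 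This interpretation is worth recording since it is the form in which the identity is used subsequently.
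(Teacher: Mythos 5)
Your proof is correct and follows essentially the same route as the paper's: both reduce to the key identity $\Phi_{X|X}(\rho_\psi)=\rho_{X\psi}$ (the paper verifies it by evaluating on an arbitrary vector $v$, you state it directly as an outer-product identity) and then conclude via equations~\eqref{our_oper_eq} and~\eqref{scalar_eq}. Your explicit handling of the degenerate indices $X_i\psi=0$ is a slight elaboration of what the paper leaves implicit, but it is not a different argument.
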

\begin{proof}
Let $X,Y\in\LL_N$ and $v\in\HH_N$. Then
$$
(\Phi_{X|Y}(\rho_{\psi}))v=X\langle \psi,Y^*v\rangle\psi=\langle
Y\psi,v\rangle X\psi.
$$
In particular, we have
\begin{equation}    \label{oper_eq}
\Phi_{X|X}(\rho_{\psi})=\rho_{X\psi}.
\end{equation}
The claim now follows from equations~~\eqref{our_oper_eq}
and~~\eqref{scalar_eq}.
\end{proof}

\medskip

\begin{corollary}       \label{quant_map_cor}
Let $X_1,\dots,X_k$ satisfy equation~~\eqref{dual_ident_res_eq}
and set  $\Phi=\Phi_{(X_1,\dots,X_k)}$. Let $\psi\in\HH_N$ be a
unit vector. Let $1\le m\le k$ be the number of unit vectors
$\psi_i$ from Proposition~~\ref{quant_map_thm}. Then
$$
\Phi(\rho_{\psi})=\sum_{i=1}^m p_i\rho_{\psi_i},
$$
where $p_1,\dots,p_m>0$ and $\sum p_i=1$.
\end{corollary}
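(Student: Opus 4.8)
The plan is to derive this corollary directly from Proposition~\ref{quant_map_thm} by tracking how the hypothesis~\eqref{dual_ident_res_eq} forces the coefficients $\|X_i\psi\|^2$ to behave like a probability distribution. Proposition~\ref{quant_map_thm} already gives the expansion
$$
\Phi(\rho_{\psi})=\sum_i\|X_i\psi\|^2\rho_{\psi_i},
$$
where the sum is over all $i$ and each term with $X_i\psi\neq 0$ contributes a genuine rank-one projection $\rho_{\psi_i}$, while the terms with $X_i\psi=0$ contribute nothing (since $\rho_{X_i\psi}=0$ by~\eqref{scalar_eq}). So the first step is simply to set $p_i=\|X_i\psi\|^2$ and restrict the index set to the $m$ values of $i$ for which $X_i\psi\neq 0$, relabeling them $1,\dots,m$; these are exactly the indices yielding unit vectors $\psi_i$.

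Next I would verify the two asserted properties of the $p_i$. Positivity is immediate: for the retained indices $X_i\psi\neq 0$, so $p_i=\|X_i\psi\|^2>0$. For the normalization, the key computation is
$$
\sum_{i=1}^k\|X_i\psi\|^2=\sum_{i=1}^k\langle X_i\psi,X_i\psi\rangle=\sum_{i=1}^k\langle X_i^*X_i\psi,\psi\rangle=\Big\langle\Big(\sum_{i=1}^k X_i^*X_i\Big)\psi,\psi\Big\rangle=\langle\psi,\psi\rangle=1,
$$
where the penultimate equality uses precisely the dual identity resolution~\eqref{dual_ident_res_eq}, and the last uses that $\psi$ is a unit vector. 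Since the discarded terms contribute zero, $\sum_{i=1}^m p_i=1$ as well.

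The only subtlety worth flagging is bookkeeping rather than mathematics: I must be careful that the $m$ retained terms are exactly those producing the unit vectors $\psi_i$ of Proposition~\ref{quant_map_thm}, so that the statement's $m$ matches its definition, and that dropping the zero terms does not alter the sum. Neither the trace-preservation reformulation nor any convexity argument is needed here—everything follows from the explicit formula and one inner-product manipulation—so I do not anticipate a genuine obstacle; the corollary is essentially a repackaging of Proposition~\ref{quant_map_thm} under hypothesis~\eqref{dual_ident_res_eq}.
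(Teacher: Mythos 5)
Your proposal is correct and follows exactly the paper's own argument: set $p_i=\|X_i\psi\|^2$, invoke the expansion~\eqref{map_explicit_eq} from Proposition~\ref{quant_map_thm}, and obtain $\sum_i p_i=1$ from the identity $\sum_i\|X_i\psi\|^2=\langle(\sum_{i=1}^k X_i^*X_i)\psi,\psi\rangle$ together with~\eqref{dual_ident_res_eq}. The paper's proof is merely terser; your added bookkeeping about discarding the indices with $X_i\psi=0$ fills in details the authors left implicit.
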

\begin{proof}
Set $p_i=||X_i\psi||^2$. The claim follows from
equation~~\eqref{map_explicit_eq} and the identity
$$
\sum_i||X_i\psi||^2=\langle\,\left(\sum_{i=1}^k X_i^*
X_i\right)\psi,\psi\rangle.
$$
\end{proof}

%
\begin{corollary}   \label{num_range_cor}
Let $X_1,\dots,X_k\in\LL_N$, and set
$\Phi=\Phi_{(X_1,\dots,X_k)}$. Suppose that $X_1,\dots,X_k$
satisfy equation~~\eqref{ident_res_eq}. Then\\
\noindent 1. For any $A\in\LL_N$ we have
\begin{equation}      \label{incl_eq}
W(\Phi(A)) \ \subset \ W(A);
\end{equation}

\noindent 2. For any $A_1,\dots,A_m\in\SSS_N$ we have
\begin{equation}   \label{JNR_incl_eq1}
W(\Phi(A_1),\dots,\Phi(A_m)) \subset\conv(W(A_1,\dots,A_m));
\end{equation}
\noindent 3. If $W(A_1,\dots,A_m)$ is convex, then we have
\begin{equation}   \label{JNR_incl_eq2}
W(\Phi(A_1),\dots,\Phi(A_m)) \subset W(A_1,\dots,A_m).
\end{equation}
\end{corollary}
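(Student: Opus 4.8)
The plan is to reduce all three claims to one computation: that each value of the transformed numerical range is a \emph{convex combination} of values of the original one, with combination weights that do not depend on the observable inserted. First I would fix a unit vector $\psi$ and an arbitrary $A\in\LL_N$ and compute
\[
\langle\psi|\Phi(A)|\psi\rangle=\sum_{i=1}^k\langle\psi|X_iAX_i^*|\psi\rangle=\sum_{i=1}^k\langle X_i^*\psi|A|X_i^*\psi\rangle ,
\]
using $\langle\psi|X_i=\langle X_i^*\psi|$. Writing $u_i=X_i^*\psi$ and $p_i=\|u_i\|^2\ge 0$, the identity resolution~\eqref{ident_res_eq} is exactly what makes the weights sum to one:
\[
\sum_{i=1}^k p_i=\Big\langle\psi\Big|\Big(\sum_{i=1}^k X_iX_i^*\Big)\Big|\psi\Big\rangle=\langle\psi|\1|\psi\rangle=1 .
\]
For each $i$ with $p_i>0$ set $\hat u_i=u_i/\|u_i\|$; then $\langle u_i|A|u_i\rangle=p_i\langle\hat u_i|A|\hat u_i\rangle$ and $\langle\hat u_i|A|\hat u_i\rangle\in W(A)$, so $\langle\psi|\Phi(A)|\psi\rangle$ is a convex combination of points of $W(A)$. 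Structurally this is Corollary~\ref{quant_map_cor} applied to the dual map $\Phi^*=\Phi_{(X_1^*,\dots,X_k^*)}$ (for which the tuple $X_1^*,\dots,X_k^*$ satisfies hypothesis~\eqref{dual_ident_res_eq} precisely because the $X_i$ satisfy~\eqref{ident_res_eq}), since $\langle\psi|\Phi(A)|\psi\rangle=\Tr\!\big(A\,\Phi^*(\rho_\psi)\big)$ and $\Phi^*(\rho_\psi)=\sum_i p_i\rho_{\hat u_i}$.

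For claim~1, I would invoke the Toeplitz--Hausdorff theorem: the numerical range $W(A)$ of a single operator is convex, so the convex combination just produced already lies in $W(A)$, giving $W(\Phi(A))\subset W(A)$.

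For claims~2 and~3 the essential observation is that the weights $p_i$ and the vectors $\hat u_i$ depend only on $\psi$ and the Kraus operators, not on which observable is inserted. Hence running the same computation simultaneously for $A_1,\dots,A_m$ yields
\[
\big(\langle\psi|\Phi(A_1)|\psi\rangle,\dots,\langle\psi|\Phi(A_m)|\psi\rangle\big)=\sum_{i:\,p_i>0}p_i\big(\langle\hat u_i|A_1|\hat u_i\rangle,\dots,\langle\hat u_i|A_m|\hat u_i\rangle\big),
\]
where each bracketed vector lies in $W(A_1,\dots,A_m)$. The left-hand side is therefore in $\conv(W(A_1,\dots,A_m))$, which is claim~2; claim~3 is then immediate because $\conv(W(A_1,\dots,A_m))=W(A_1,\dots,A_m)$ under the convexity hypothesis.

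The computation itself is routine, so the step I would flag is conceptual rather than technical: convexity is available \emph{for free} only in the single-operator case (claim~1), where Toeplitz--Hausdorff applies. For $m>2$ the joint numerical range need not be convex, and this is exactly why the general statement can only assert containment in the convex hull (claim~2), with genuine containment in the joint numerical range (claim~3) requiring convexity to be assumed as a hypothesis.
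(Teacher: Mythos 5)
Your proof is correct and follows essentially the same route as the paper: both pass the Kraus operators onto $\psi$ (equivalently, apply the dual map $\Phi^*=\Phi_{(X_1^*,\dots,X_k^*)}$ to $\rho_\psi$ via Proposition~\ref{quant_map_thm} and Corollary~\ref{quant_map_cor}), use the identity resolution~\eqref{ident_res_eq} to make the weights a probability distribution, and invoke Toeplitz--Hausdorff only for claim~1. Your explicit remark that the weights $p_i$ and vectors $\hat u_i$ are independent of the inserted observable is exactly the point the paper leaves implicit when deducing claim~2.
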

\begin{proof}
Let $A\in\LL_N$ be arbitrary, and let $\psi\in\HH_N$ be a unit
vector. Then
$$
\Tr(\Phi(A)\rho_{\psi})=\Tr\left((\sum_{i=1}^kX_iAX_i^*)\rho_{\psi}\right)=\Tr\left(A(\sum_{i=1}^kX_i^*\rho_{\psi}X_i)\right).
$$
Note that the operators $X_1^*,\dots,X_k^*$ satisfy
equation~~\eqref{dual_ident_res_eq}. By
Proposition~~\ref{quant_map_thm} and
Corollary~~\ref{num_range_cor}, there are unit vectors $\psi_j\in
\HH_N$ and probabilities $p_j$ such that
\begin{equation}   \label{incl_iden_eq}
\Tr\left(\Phi(A)\rho_{\psi}\right)=\sum_jp_j\Tr\left(A\rho_{\psi_j}\right)=\sum_jp_j\langle\psi_j|A|\psi_j\rangle.
\end{equation}
Claim 2 now follows from equation~~\eqref{jnr_map_eq}. Claim 3 is
a special case of Claim 2. Claim 3 and the Toeplitz-Hausdorff
theorem yield Claim 1.
\end{proof}

\medskip

The examples below illustrate the relationship between
quantum maps and numerical ranges, which is the subject of  
Corollary~~\ref{num_range_cor}.


\noindent{\bf Example 2.} Let $N=2$. The {\em decaying channel}
is the discrete dynamics $\Phi=\Phi_{X_1,X_2}$ corresponding to the
Kraus operators
$X_1=\left[\begin{array}{cc}
                  1  &  0  \\
                  0  & \sqrt{1-p}   \\
\end{array}\right]$ and
$X_2= \left[\begin{array}{cc}
                  0  & \sqrt{p}  \\
                  0  & 0   \\
\end{array}\right]$, where $p \in (0,1)$ is a free parameter. Set $A^{(j)}=\Phi^{j-1}(A^{(1)})$, where
$A^{(1)}=
 \left[\begin{array}{cc}
                  0  & 1  \\
                  0  & 0   \\
\end{array}\right]$. Note that $\tr(A^{(1)})=0$. The set $W(A^{(1)})$ is the disc of radius $1/2$
centered at the origin, i.e., at the {\em barycenter point} $\tr(A^{(1)})/2$.
We have $A^{(j)}=(1-p)^{(j-1)/2}A^{(1)}$, and $W(A^{(j)})=(1-p)^{(j-1)/2}W(A^{(1)})$ is the disc
of radius $\frac12 (1-p)^{(j-1)/2}$ with the center at $0$.
For instance, $A^{(2)}=\Phi(A^{(1)})=
 \left[\begin{array}{cc}
                  0  & \sqrt{1-p}  \\
                  0  & 0   \\
\end{array}\right]$, and
$A^{(3)}=\Phi^2(A^{(1)})=
 \left[\begin{array}{cc}
                  0  &  1-p  \\
                  0  & 0   \\
\end{array}\right]$.
The limit of $W(A^{(j)})$, as $j\to \infty$, is $\{0\}$.
Fig. 1a shows $W(A^{(1)}),W(A^{(2)}),W(A^{(3)})$ for $p=0.5$.

\begin{figure}[htbp]
\begin{center}
\includegraphics[scale=0.55]{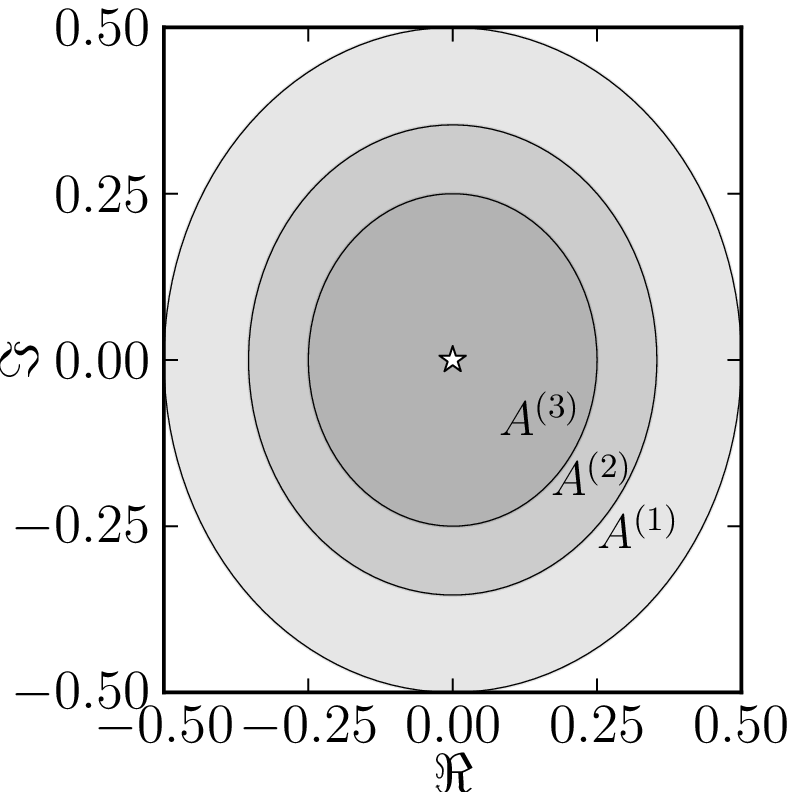}
\hskip 0.6cm
\includegraphics[scale=0.55]{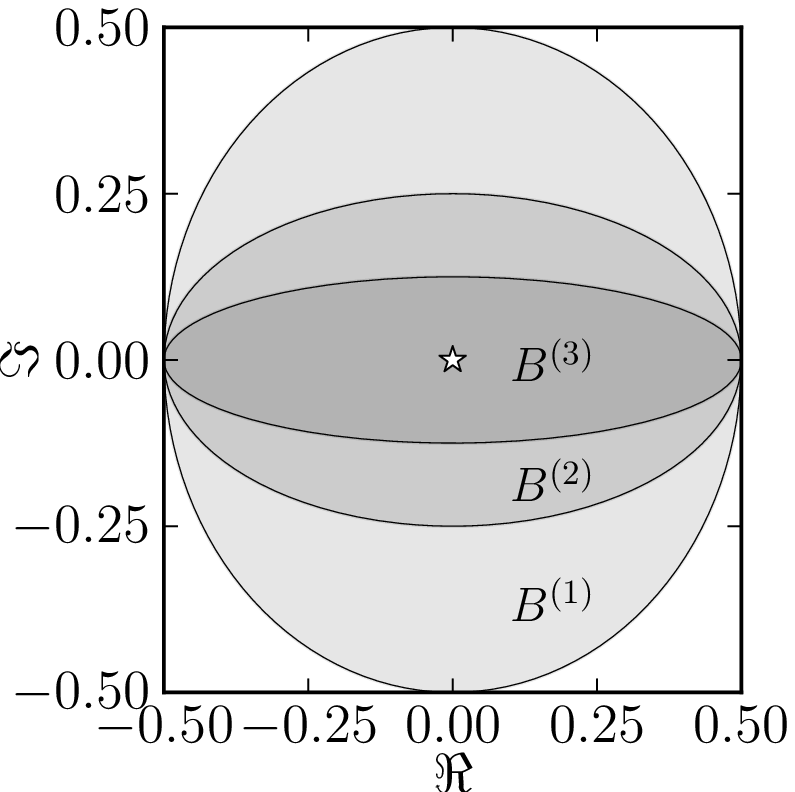}
\caption{Numerical ranges $W(A^{(1)}),W(A^{(2)}),W(A^{(3)})$ from
Example 2 for $p=0.5$ and $W(B^{(1)}),W(B^{(2)}),W(B^{(3)})$ from
Example 3 for $p=0.25$. The stars mark the barycenter points in these
examples. They are preserved by the dynamics.
}
\label{fig1}
\end{center}
\end{figure}


\noindent{\bf Example 3.} Let again $N=2$. The {\em phase-flip channel}
is the discrete dynamics $\Psi=\Phi_{X_1,X_2}$ corresponding to the
Kraus operators $X_1=\sqrt{1-p}\; {\1}_2$ and $X_2=\sqrt{p}\sigma_1$.
Here again, $p \in (0,1)$ is a free parameter.
We set $B^{(1)}=A^{(1)}$ and $B^{(j+1)}=\Psi^j(B^{(1)})$. Then
$B^{(2)}=
 \left[\begin{array}{cc}
                  0  & 1-p  \\
                  p  & 0   \\
\end{array}\right]$
and
$B^{(3)}=
 \left[\begin{array}{cc}
                  0  &  1-2p(1-p)  \\
                  2p(1-p)  & 0   \\
\end{array}\right]$. The numerical ranges of all $B^{(j)}$ are ellipses.
Fig. 1b  shows $W(B^{(1)}),W(B^{(2)}),W(B^{(3)})$ for $p=0.25$.

\medskip

\noindent{\bf Example 4.} This example is a generalization of Example 3 to
$N=3$. The {\em double flip channel acting on a qutrit} is the the discrete dynamics
$\Xi=\Phi_{X_1,X_2,X_3}$; the Kraus operators are $X_1=\sqrt{1-p-q}\1_3$,
$X_2=
\sqrt{p} \left[\begin{array}{ccc}
                  0  &  1 & 0 \\
                  1  &  0 &   0   \\
                  0  & 0  & 1  \\
\end{array}\right]$
and
$X_3= \sqrt{q}
\left[\begin{array}{ccc}
                  1  &  0  & 0 \\
                  0  &  0  & 1 \\
                  0  &  1  & 0 \\
\end{array}\right]$. The parameters $p,q$ satisfy $0\le p,q,\,p+q\le 1$.
The operators $X_2$ and $X_3$ correspond to bit flips with probabilities $p$ and $q$.
The operator $\Xi$ is trace preserving. The operator
$C^{(1)}=
 \left[\begin{array}{ccc}
                  0  &  1  & 0 \\
                  0  &  1  & 0 \\
                  0  &  0  & 2i \\
\end{array}\right]$ was studied in \cite{P+12}; the
numerical range $W(C^{(1)})$ is an ellipse. The {\em barycenter} of $W(C^{(1)})$ is $\tr(C^{(1)})/3=(1+2i)/3$.
Set $C^{(j)}=\Xi^{j-1}(C^{(1)})$. For instance,
$C^{(2)}=
 \left[\begin{array}{ccc}
                  p  &  1-p-q  &       q \\
                  p  &  1-p-q(1-2i)  & 0 \\
                  0  &  0  & 2i +q(1-2i) \\
\end{array}\right]$. In Figure~~\ref{fig2} $p=0.5$ and $q=0.4$, hence $1-p-q=0.1$.

\begin{figure}[htbp]
\begin{center}
\includegraphics[scale=0.65]{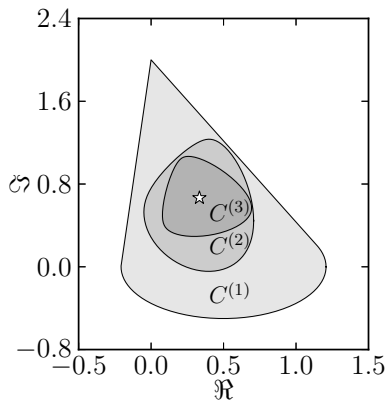}
\caption{This figure illustrates Example 4 and the inclusion relation~~\eqref{incl_eq}.
It shows the numerical ranges $W(C^{(1)}),W(C^{(2)}),W(C^{(3)})$ for $p=0.5$ and $q=0.4$.
The limit of $W(C^{(j)})$, as $j\to\infty$, is the barycenter point $(1+2i)/3$
 marked by the star.
}
 \label{fig2}
\end{center}
\end{figure}

\section{Joint numerical shadows}       \label{num_shad}
We will first recall the notion of the {\em numerical shadow} of an operator on $\HH_N$ \cite{DGHPZ11,GS10}.
Let $\mu$ be the normalized Haar measure on $S(\HH_N)=\{\psi\in\HH_N:||\psi||=1\}$, i.e.,
the unit sphere $S^{2N-1}$. We also denote by $\mu$ its push-forward to $\Om_N\simeq\C
P^{N-1}$.\footnote{It is known as the Fubini--Study measure in the
physics literature.} Let $A\in\LL_N$. The numerical shadow $\nu_A$ is the push-forward of $\mu$ to
$\C$ under the numerical range map. If $dz$ denotes the Lebesgue
measure on $\C$, then $d\nu_A(z)=P_A(z)dz$ where
\begin{equation}  \label{shadow2}
P_A(z) = \ \int_{\Omega_N} {\rm d} \mu(\psi) \
 \delta\Bigl( z-\langle \psi|A|\psi\rangle\Bigr)
\end{equation}
is a probability distribution.

\medskip

Let now $A_1,\dots, A_m\in\SSS_N$. Their {\em joint numerical shadow} $\nu_{A_1,\dots,A_m}$ is
the push-forward of $\mu$ under the joint numerical range map ${\rm jnr}_{(A_1,\dots,A_m)}:\Om_N\to\R^m$. 
Thus, $\nu_{A_1,\dots,A_m}$ is a probability measure supported on $W(A_1,\dots,A_m)$. Let $dx^m$ denote the Lebesgue
measure on $\R^m$. Then 
$d\nu_{A_1,\dots,A_m}=P_{A_1,\dots,A_m}(x_1,\dots,x_m)dx^m$ where
\begin{equation}    \label{shadowk}
P_{A_1,\dots,A_m}(x_1,\dots,x_m) = \int_{\Omega_N} {\rm d} \mu(\psi) \
 \prod_{j=1}^m  \delta\Bigl(x_j-\langle \psi|A_j|\psi\rangle\Bigr).
\end{equation}
is a probability distribution.
Numerical shadow is the special case of the joint numerical shadow
corresponding to $m=2$. See the examples below for illustration.

\medskip

Let $k_1,\dots,k_m\in\N$. The {\em moments}
\begin{equation}    \label{moments}
I_{k_1,\dots,k_m}(A_1,\dots,A_m)=\int_{\R^m}x_1^{k_1}\cdots x_m^{k_m}d\nu_{A_1,\dots,A_m}(x_1,\dots,x_m).
\end{equation}
are well defined and, by the Stone-Weierstrass theorem, uniquely determine the joint numerical shadow.
When $m=2$, we recover the moments of the numerical shadow $\nu_A$ for the matrix $A=A_1+iA_2\in\LL_N$
introduced in~~\cite{DGHPZ11}.  Some of the results  in~~\cite{DGHPZ11} extend to the 
moments of joint numerical shadows
for arbitrary $m$. We will report on this in a separate publication.

\medskip

If $\nu$ is a measure on $\R^m$ and $a\in\R$, we denote by $\nu^a$
the push-forward of $\nu$ under the self-map $v\mapsto av$ of
$\R^m$. If $\nu_1,\nu_2$ are measures on $\R^m$, we denote by
$\nu_1*\nu_2$ their {\em convolution}. The following proposition
exposes a few basic properties of joint numerical shadows. We
leave the proof to the reader.

\begin{proposition}      \label{jnr_lem}
1. Let $A_1,\dots,A_m\in\SSS_N$. Let $U\in\LL_N$ be a unitary
operator. Set $B_i=UA_iU^*,1\le i \le m$. Then
$$
\nu_{B_1,\dots,B_m}=\nu_{A_1,\dots,A_m}.
$$

\noindent 2. Let $A_1,\dots,A_m\in\SSS_N$ and
$B_1,\dots,B_m\in\SSS_N$ be arbitrary. Let $a,b\in\R$. Then
$$
\nu_{aA_1+bB_1,\dots,aA_m+bB_m}=\nu_{A_1,\dots,A_m}^a*\nu_{B_1,\dots,B_m}^b.
$$
\end{proposition}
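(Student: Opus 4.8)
The plan is to read every joint numerical shadow as a push-forward of the Haar measure $\mu$ under the quadratic map ${\rm jnr}$ of \eqref{jnr_map_eq}, so that $\nu_{A_1,\dots,A_m}=({\rm jnr}_{(A_1,\dots,A_m)})_{*}\mu$, and then to exploit two standard facts: the invariance of $\mu$ under the unitary action for part~1, and the behaviour of push-forwards under scaling and convolution, read through characteristic functions, for part~2. Both reductions are formal; the content, if any, sits in a single factorization identity discussed at the end.

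For part~1 the key identity is $\langle\psi|UA_iU^{*}|\psi\rangle=\langle U^{*}\psi|A_i|U^{*}\psi\rangle$, valid for every $i$ and every unit vector $\psi$. Hence on $\Om_N$ the map ${\rm jnr}_{(B_1,\dots,B_m)}$ equals ${\rm jnr}_{(A_1,\dots,A_m)}\circ R_{U^{*}}$, where $R_{U^{*}}$ is the self-map of $\Om_N\simeq\C P^{N-1}$ induced by $\psi\mapsto U^{*}\psi$. Since $\mu$ is the normalized Haar (Fubini--Study) measure, it is invariant under the unitary group, so $(R_{U^{*}})_{*}\mu=\mu$, and functoriality of the push-forward gives
$$\nu_{B_1,\dots,B_m}=({\rm jnr}_{(A_1,\dots,A_m)}\circ R_{U^{*}})_{*}\mu=({\rm jnr}_{(A_1,\dots,A_m)})_{*}(R_{U^{*}})_{*}\mu=\nu_{A_1,\dots,A_m}.$$
One may equally check this on the moments \eqref{moments}, using invariance of $\tr$ under conjugation by $U$.

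For part~2 I would pass to characteristic functions $\widehat{\nu}(t)=\int_{\R^m}e^{i\langle t,x\rangle}\,d\nu(x)$, which determine a measure uniquely (just as the moments do, by the remark after \eqref{moments}). Because ${\rm jnr}$ is linear in the operator slot, ${\rm jnr}_{(aA_1+bB_1,\dots,aA_m+bB_m)}(\psi)=a\,{\rm jnr}_{(A_1,\dots,A_m)}(\psi)+b\,{\rm jnr}_{(B_1,\dots,B_m)}(\psi)$. The scaling law reads $\widehat{\nu^{a}}(t)=\widehat{\nu}(at)$ and the convolution law reads $\widehat{\nu_1*\nu_2}=\widehat{\nu_1}\,\widehat{\nu_2}$; writing $S=\sum_j t_jA_j$ and $T=\sum_j t_jB_j$, the asserted identity is therefore equivalent to the factorization
$$\int_{\Om_N}e^{\,i\langle\psi|(aS+bT)|\psi\rangle}\,d\mu(\psi)=\Bigl(\int_{\Om_N}e^{\,ia\langle\psi|S|\psi\rangle}\,d\mu(\psi)\Bigr)\Bigl(\int_{\Om_N}e^{\,ib\langle\psi|T|\psi\rangle}\,d\mu(\psi)\Bigr)$$
for all $t\in\R^m$.

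The main obstacle is precisely this factorization, and it is where I expect the real difficulty to concentrate: the same random state $\psi$ feeds both functionals $\psi\mapsto\langle\psi|S|\psi\rangle$ and $\psi\mapsto\langle\psi|T|\psi\rangle$, so the display asserts their statistical \emph{independence} under $\mu$, equivalently the vanishing of all their mixed joint cumulants. I would first test this on the mixed moments \eqref{moments}, checking whether $I_{k_1,\dots,k_m}$ splits as a product over the $A$- and $B$-slots; a shared $\psi$ makes such splitting delicate and not automatic, so this computation is what decides the matter. If it does not hold for arbitrary $A_i,B_i$, the honest conclusion is that the statement needs a decoupling hypothesis (for instance a block-diagonal or tensor-product splitting of $\HH_N$ separating the $A$'s from the $B$'s, or the degenerate cases $a=0$ or $b=0$), under which an averaging argument over the appropriate unitary stabilizer would yield the factorization; the scaling and convolution bookkeeping surrounding it is routine by comparison.
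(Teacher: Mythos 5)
Your proof of part 1 is complete and correct: unitary invariance of the Fubini--Study measure plus functoriality of push-forwards is all that is needed. In fact the paper offers no proof at all for this proposition (it is explicitly ``left to the reader''), so your argument for part 1 is exactly the intended one.

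Your skepticism about part 2 is also justified, and the computation you propose to ``decide the matter'' does decide it: the statement as printed is false. Take $N=2$, $m=1$, $A_1=B_1=\sigma_3$, $a=b=1$. For a Haar-random unit vector $\psi=(z_1,z_2)$ the quantity $\langle\psi|\sigma_3|\psi\rangle=2|z_1|^2-1$ is uniformly distributed on $[-1,1]$, because $|z_1|^2$ is uniform on $[0,1]$. Hence the left-hand side $\nu_{2\sigma_3}$ is the uniform measure on $[-2,2]$, with second moment $4/3$, while the right-hand side $\nu_{\sigma_3}*\nu_{\sigma_3}$ is the triangular distribution of a sum of two \emph{independent} uniform variables, with second moment $2/3$. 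The measures differ already at the level of the moments \eqref{moments}. The mechanism is precisely the one you isolated: the left side evaluates both functionals at the same random state, the right side at independent states. Two cautions on your proposed rescues: the degenerate cases work (if $B_i=c_i\1$ then $\nu_{B_1,\dots,B_m}$ is a point mass and the convolution is a translation, so the identity holds, covering $a=0$ or $b=0$), but the tensor-product decoupling would not suffice, since for a Haar-random pure state on $\HH_2\otimes\HH_2$ the two reduced density matrices share the same spectrum, so functionals of the form $\langle\psi|A\otimes\1|\psi\rangle$ and $\langle\psi|\1\otimes B|\psi\rangle$ remain correlated. In short: there is no gap in your reasoning; the gap is in the paper's statement, and since the paper supplies no proof, the error went unchecked there.
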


\medskip

\noindent{\bf Example 5}. We review Example 1. The
set $W(\sigma_1,\sigma_2,\sigma_3)\subset\R^3$ is the unit sphere.
The joint numerical shadow
$\nu_{\sigma_1,\sigma_2,\sigma_3}$ is the normalized Haar measure.

\medskip

\noindent{\bf Example 6}. We use the isomorphism
$\HH_4=\HH_2\otimes\HH_2$ to define the extensions of Pauli
matrices $A_j=\sigma_j\otimes {\mathbf 1}_2,\,j=1,2,3$. We use the
fact that the Haar measure on the space of pure states in ${\HH}_A
\otimes{\HH}_B$ induces by partial trace, $\omega= {\rm Tr}_B
|\psi\rangle \langle \psi|$, the Lebesgue measure on the space of
mixed states in $\HH_2$ \cite{ZS00}. Using the equality between
the expected values of an operator on $\HH_2$ and the extended
operator on $\HH_4$, we obtain that $\nu_{A_1,A_2,A_3}$ is the
normalized Lebesgue measure on the  Bloch ball.
Let $B_j={\mathbf 1}_2\otimes\sigma_j,\,j=1,2,3$. The
{\em swap operator}
\begin{equation}
S = \left[\begin{array}{cccc}
                  1  & 0 & 0 & 0 \\
                  0  & 0 & 1 & 0 \\
                  0  & 1 & 0 & 0 \\
                  0  & 0 & 0 & 1 \\
\end{array}\right],
\label{swap}
\end{equation}
is a unitary operator on $\HH_4$ satisfying $B_j=SA_jS^*$. By
Proposition~~\ref{jnr_lem} and the above discussion,
$\nu_{B_1,B_2,B_3}$ is the normalized Lebesgue measure on the
Bloch ball.

\bigskip

\noindent{\bf Acknowledgements}. This project started at the 43th Symposium on Mathematical Physics in
Toru{\'n}; it was accomplished during the 44th Symposium.
It is a pleasure to thank the organizers of these Symposia. We are obliged to
Piotr Gawron for creating the figures.
K.~{\.Z}. is grateful to C.~F.~Dunkl, J.~A.~Holbrook, J.~Miszczak
and Z.~Pucha{\l}a for fruitful discussions on numerical shadows
and acknowledges financial support by the grant N202 090239
of the Polish Ministry of Science and Higher Education. The work of
E.G. was partially supported by the MNiSzW grant N N201 384834 and
the NCN Grant DEC-2011/03/B/ST1/00407. He
acknowledges stimulating discussions with Bent Orsted in Aarhus in
June 2011. Finally, we thank the anonymous referee for comments and suggestions.

\end{document}